\newcommand{\sBQP}{{\textsf{\#BQP}}}
\newcommand{\QMA}{\textsf{QMA}}
\newcommand{\sP}{\textsf{\#P}}
\newcommand{\PP}{\textsf{PP}}
\newcommand{\NP}{\textsf{NP}}
\newcommand{\ket}[1]{\lvert#1\rangle}
\newcommand{\bra}[1]{\langle#1\rvert}
\newcommand{\poly}{\mathrm{poly}}
\newcommand{\mc}[1]{\mathcal{#1}}
\newcommand{\DOS}{\textsc{dos}}
\newcommand{\sLH}{\textsc{\#lh}}
\newcommand{\ketanc}{\ket{\bm{0}}_A}
\newcommand{\braanc}{\bra{\bm 0}_A}
\newcommand{\tr}{\mathrm{tr\:}}
\newcommand{\ketbra}[2]{\lvert{#1}\rangle\!\langle{#2}\rvert}
\newcommand{\proj}[1]{\ketbra{#1}{#1}}
\newcommand{\abs}[1]{\lvert#1\rvert}
\newcommand{\one}{\mathbbm{1}}
\renewcommand{\openone}{\one}
 \newtheorem{theorem}{Theorem}
 \newtheorem{definition}[theorem]{Definition}
 \newtheorem{lemma}[theorem]{Lemma}
 \renewenvironment{proof}[1][Proof]{\noindent\textbf{#1.} }{\,$\Box$\\}
\begin{document}

\title{Computational Difficulty of Computing the Density of States}

\author{Brielin Brown}
\affiliation{University of Virginia, Departments of Physics and Computer Science, 
Charlottesville, Virginia 22904, USA}
\affiliation{Perimeter Institute for Theoretical Physics, 
Waterloo, Ontario N2L 2Y5, Canada}
\author{Steven T.\ Flammia}
\affiliation{Perimeter Institute for Theoretical Physics, 
Waterloo, Ontario N2L 2Y5, Canada}
\author{Norbert Schuch}
\affiliation{California Institute of Technology,
    Institute for Quantum Information,
    MC 305-16, Pasadena, California 91125, USA}

\begin{abstract}
We study the computational difficulty of computing the ground state
degeneracy and the density of states for local Hamiltonians. We show that
the difficulty of both problems is exactly captured by a class which we
call \sBQP, which is the counting version of the quantum complexity class
\QMA. We show that \sBQP{} is not harder than its classical counting
counterpart \sP, which in turn implies that computing the ground state
degeneracy or the density of states for classical Hamiltonians is just as
hard as it is for quantum Hamiltonians.
\end{abstract}

\maketitle

Understanding the physical properties of correlated quantum many-body
systems is a problem of central importance in condensed matter physics.
The density of states, defined as the number of energy eigenstates per
energy interval, plays a particularly crucial role in this endeavor.  It
is a key ingredient when deriving many thermodynamic properties from
microscopic models, including specific heat capacity, thermal
conductivity, band structure, and (near the Fermi energy) most electronic
properties of metals. Computing the density of states can be a daunting
task however, as it in principle involves diagonalizing a Hamiltonian
acting on an exponentially large space, though other more efficient
approaches which might take advantage of the structure of a given problem
are not a priori ruled out.

In this Letter, we precisely quantify the difficulty of computing the
density of states by using the powerful tools of quantum complexity
theory. Quantum complexity aims at generalizing the well-established field
of classical complexity theory to assess the difficulty of tasks related
to quantum mechanical problems, concerning both the classical difficulty
of simulating quantum systems as well as the fundamental limits to the
power of quantum computers.  In particular, quantum complexity theory has
managed to explain the difficulty of computing ground state
properties of quantum spin systems in various settings, such as
two-dimensional (2D) lattices~\cite{terhal:lh-2d-qma} and even
one-dimensional (1D) chains~\cite{aharonov:1d-qma}, as well as fermionic
systems~\cite{schuch:dft-qma}.

We will determine the computational difficulty of two problems: First,
computing the density of states of a local Hamiltonian, and second,
counting the ground state degeneracy of a local gapped Hamiltonian; 
in both cases, the result holds even if the Hamiltonian is restricted to
act on a 2D lattice of qubits, or on a 1D chain.
To this end, we will introduce the quantum counting class \sBQP{} (sharp
BQP), which constitutes the natural counting version of the class \QMA{}
(Quantum Merlin Arthur) which itself captures the difficulty of computing
the ground state energy of a local
Hamiltonian~\cite{kitaev:book,kitaev:qma}. Vaguely speaking, \sBQP{}
counts the number of possible ``quantum solutions'' to a quantum problem
that can be verified using a quantum computer.  We show that both
problems, computing the density of states and counting the ground state
degeneracy, are complete problems for the class \sBQP, i.e., they are
among the hardest problems in this class. 

Having quantified the difficulty of computing the density of states and
counting the number of ground states, we proceed to relate \sBQP{} to
known classical counting complexity classes, and show
that \sBQP{} equals \sP{} (under weakly parsimonious reductions).
Here, the complexity class \sP\ counts the number of satisfying
assignments
to any efficiently computable boolean function. This can be a very hard
problem which is believed to take exponential time; in particular, it is
at least as hard as deciding whether the function has at least one
satisfying input, i.e., the complexity class
\NP. Examples for \sP-complete problems (i.e., the hardest problems in
that class) include counting the number of colorings of a graph, or
computing the permanent of a matrix with binary entries.
  Phrased in terms of Hamiltonians, what we show
is that computing the density of states and counting the ground state
degeneracy of a classical spin system is just as hard as
solving the same problem for a quantum Hamiltonian.

\emph{Quantum complexity classes.---}%
Let us start by introducing the relevant complexity classes. The central
role in the following is taken by the \emph{verifier} $V$, which verifies
``quantum solutions'' (also called \emph{proofs}) to a given problem. More
formally, a verifier checking an $n$-qubit quantum proof (that is, a
quantum state $\ket{\psi}$) consists of a $T=\poly(n)$ length quantum
circuit $U=U_T\cdots U_1$ (with local gates $U_t$) acting on $m=\poly(n)$
qubits, which takes the $n$-qubit quantum state $\ket{\psi}_I$ as an
input, together with $m-n$ initialized ancillas,
$\ketanc\equiv\ket{0\cdots0}_A$, applies $U$, and finally measures the
first qubit in the $\{\ket{0}_1,\ket{1}_1\}$ basis to return $1$ (``proof
accepted'') or $0$ (``proof rejected'').  Then, the class \QMA\ contains
all problems of the form: ``Decide whether there exists a $\ket\psi$ such
that $p_\mathrm{acc}(V(\psi))>a$, or whether $p_\mathrm{acc}(V(\psi))<b$
for all $\ket\psi$, for some chosen $a-b>1/\poly(n)$, given that one is
the case''.  Here, the acceptance probability of a state $\ket\psi$ is
$p_\mathrm{acc}(V(\psi)):= \bra\psi \Omega \ket\psi$, with
\vspace*{-0.05cm}
\begin{equation}
    \label{eq:Omega}
\Omega=(\one_I\otimes\braanc) U^\dagger
(\proj{1}_1\otimes\one)
U(\one_I\otimes\ketanc)\, ,
\end{equation}
\vspace*{-0.05cm}
which we illustrate in Fig.~\ref{F:circuit}.

\begin{figure}[t]
        \includegraphics[width=0.9\columnwidth]{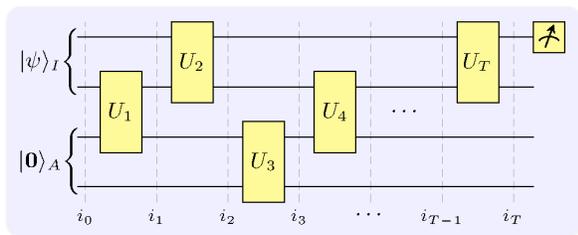}
\caption{\label{F:circuit}%
A \QMA{} verifier consists of a sequence of $T$ local
unitary gates acting on the ``quantum proof'' $\ket{\psi}$ and an
ancillary register initialized to $\ket{\bm{0}}$.  The final measurement
on the first qubit returns $\ket{1}$ or $\ket{0}$ to accept or reject the
proof, respectively. Transition probabilities can be computed by doing a
``path integral'' over all intermediate configurations $(i_k)_k$.
}
\end{figure}

The idea behind this definition is that \QMA{} quantifies the difficulty
of computing the ground state energy $E_0(H)$ of a local Hamiltonian $H$
up to $1/\poly(n)$ accuracy.  Let the verifier be a circuit estimating
$\bra\psi H\ket\psi$; then a black box solving \QMA{} problems can be used
to compute $E_0(H)$ up to $1/\poly(n)$ accuracy by binary search using a
single \QMA{} query. Note also that \QMA{} is the quantum version of the
class \NP, where one is given an efficiently computable boolean function
$f(x)\in\{0,1\}$ and one must figure out if there is an $x$ such that
$f(x)=1$. 

The class \NP{} has a natural counting version, known as \sP.  Here, the
task is to determine the \emph{number} rather than the existence of
satisfying inputs, i.e., to compute 
$\abs{\{x:f(x)=1\}}$.  We will now analogously
define \sBQP, the counting version of \QMA. Consider the verifying map
$\Omega$ of Eq.~\eqref{eq:Omega} for a \QMA{} problem, 
with the additional promise that $\Omega$
does not have eigenvalues between $a$ and $b$, $a-b>1/\poly(n)$.  
Then the class \sBQP{} consists of all problems of the form ``compute
the dimension of the space spanned by all eigenvectors with eigenvalues
$\ge a$''.

An equivalent definition for \sBQP{} (cf.\ also
\cite{brandao:pursuit-for-uniqueness,jain:unique-witness}) is the
following: Consider a verifier $\Omega$ with the additional promise that
there exist subspaces $\mathcal A\oplus \mathcal R = \mathbb C^{2^n}$ such
that $\bra{\psi}\Omega\ket\psi\ge a$ for all $\ket\psi\in \mathcal A$, and
$\bra{\psi}\Omega\ket\psi\le b$ for all $\ket\psi\in \mathcal R$, where
again $a-b>1/\poly(n)$ -- we can think of $\mathcal A$ and $\mathcal R$ as
containing the good and bad witnesses, respectively.  (Note that there
will always be ``mediocre'' witnesses---the question is whether 
there exists a
decomposition into a good and a bad witness space.) Then, \sBQP{}
consists of all problems of the form ``compute $\mathrm{dim}\ \mathcal
A$''.  This number is well-defined, i.e., independent of the choice of
$\mc A$ and $\mc R$, and moreover, one can easily show that it is
equivalent to the definition above, cf.\ the
Supplementary Material.

The gap promise we impose on the spectrum of $\Omega$ is not present in
the definition of \QMA{} (though similarly restricted versions of \QMA{}
were defined in
\cite{brandao:pursuit-for-uniqueness,jain:unique-witness}).  Nevertheless,
this promise emerges naturally when considering the counting version:
\QMA{} captures the difficulty of determining the existence of an input
state with acceptance probability above $a$, up to a ``grace interval'' 
$[b,a]$ in which mistakes are tolerated (i.e., if the largest
eigenvalue of $\Omega$ is in $[b,a]$, the oracle can return either
outcome).  Correspondingly, \sBQP\ captures the difficulty of counting the
number of eigenvalues above $a$, where eivenvalues in the
grace interval $[b,a]$ can be miscounted. The reason why we choose
to define \sBQP\ with a gap promise rather than with a grace interval is
the same as for \QMA, namely to have a unique outcome associated with any
input.

Similarly, the idea of the Hamiltonian formulation of the problem which we
will discuss below is to ask for the number of eigenstates in a certain
energy interval, where states which are in some small $1/\poly(n)$
neighborhood of this interval may be miscounted; again, for reasons of
rigor we choose to consider only Hamiltonians with no eigenstates in that
interval.  It should be noted,  however, that all of the equivalence
proofs we give equally apply if we choose to allow for miscounting of
states in those grace intervals instead of requiring them to be empty,
as the proofs do not make use of the gap promise itself, but rather show that
all states outside those grace intervals are mapped (and thus counted)
correctly.  Thus, while the actual number returned by the grace
interval formulation of the counting problems might change under those
mappings due to different treatment of states in the grace interval, it
will still be in the correct range.

The class \sBQP{} inherits the important property from \QMA{} of being
stable under amplification, that is, the definition of \sBQP{} is not
sensitive to the choice of $a$ and $b$. In particular, any
$a-b>1/\poly(n)$ can be amplified (by building a new poly-size $\Omega'$
from $\Omega$) such that $a'=1-\exp(-\poly(n))$, $b'=\exp(-\poly(n))$, and
keeping the eigenvalue gap between $a'$ and $b'$, by using a construction
called \emph{strong amplification},
cf.~Ref.~\cite{marriott:qma-strongamp}; as shown there, strong
amplification acts on all eigenvalues independently and thus also applies to
\sBQP. The crucial point is that strong amplification works without
changing the proof itself, compared to weak amplification which takes
multiple copies of the proof as an input. While this is fine for \QMA, it does
change the dimension of the accepting subspace in an unpredictable way and
is thus not an option for the amplification of \sBQP.

\emph{Complexity of computing the density of states.---}%
Let us now show why the class \sBQP{} is relevant for physical
applications. In particular, we are going to show that computing the
density of states of a local $n$-spin Hamiltonian $H=\sum_i H_i$ with
few-body terms $H_i$, $\|H_i\|\le 1$, up to accuracy $1/\poly(n)$,  is a problem
which is complete for \sBQP, i.e., it is as hard as any problem in \sBQP{}
can be. The same holds true for the (a priori weaker)
problem of counting the ground state degeneracy of a local Hamiltonian,
given a $1/\poly(n)$ spectral gap above (note that 
Bravyi \emph{et al.}~\cite{Bravyi2009a} suggested this as a 
definition for a quantum counting class).  We can impose
additional restrictions on the interaction structure of our Hamiltonian, and as we
will see, the hardness is preserved even for 2D lattices of
qubits, or 1D systems.

The problem \DOS{} (density of states) is defined as follows: Given a
local Hamiltonian $H=\sum_i H_i$, compute the number of orthogonal
eigenstates with eigenvalues in an
interval $[E_1,E_2]$ with $E_2-E_1>1/\poly(n)$, where we do not allow for
eigenvalues within a small grace interval of width
$\Delta=(E_2-E_1)/\poly(n)$ centerd around $E_1$ and $E_2$; alternatively,
we can allow for errorneous counts of eigenstates in that interval.
Second, the problem \sLH{} (sharp local Hamiltonian) corresponds to
counting the number of ground states of a
local Hamiltonian which has a spectral gap $\Delta=1/\poly(n)$ above the
ground state subspace, given we are told the ground state energy, and
where we allow for a small splitting in the ground state  
energies; again, we can alternatively allow to miscount states in the grace 
interval.

Clearly, \sLH{} is a special instance of \DOS, i.e., solving \sLH{} can be
reduced to solving \DOS. In order to show that \DOS\ is contained in
\sBQP, we can use a phase estimation circuit~\cite{cleve:qalg_revisited}
to estimate the energy of any given input $\ket\psi$ and only accept if
its energy $\bra\psi H\ket\psi$ is in the interval $[E_1,E_2]$; 
as the desired accuracy $\Delta=1/\poly(n)$, this can be done efficiently.
A detailed proof (using a more elementary circuit) is given in the
Supplementary Material.

Let us now conversely show that \sLH{} is a hard problem for \sBQP, that
is, any problem in \sBQP{} can be reduced to counting the ground states of 
some gapped local Hamiltonian~\footnote{
Note that the connection between \QMA{} with a unique ``good witness'',
such as in our \sBQP{} definition, and local Hamiltonians with unique
ground state and a $1/\poly(n)$ gap has been shown
in~\cite{brandao:pursuit-for-uniqueness}.
}. As in turn \sLH{} can be reduced to \DOS,
which is contained in \sBQP,
this proves that both \sLH{} and \DOS{} are complete problems for 
\sBQP, i.e., they capture the full difficulty of
this class. To this end, we need to start from an arbitrary verifier
circuit $U=U_T\cdots U_1$  and construct a
Hamiltonian which has as many ground states as the circuit has accepting
inputs (corresponding to the outcome $\ket{1}_1$ on the first qubit). 
Let $\mc A$ and $\mc R$ be the eigenspaces of $\Omega$
[Eq.~\eqref{eq:Omega}] with eigenvalues $\ge a=1-2^{-\poly(n)}$ and $\le
b=2^{-\poly(n)}$, respectively, and define $U[\mc
R]:=\{U\ket{\psi}_I\ketanc:\ket\psi_I\in \mathcal R\}$.

We will follow Kitaev's original construction for a Hamiltonian
encoding a \QMA{} verifier circuit~\cite{kitaev:book,kitaev:qma}, which for
any proof $\ket\psi_I\in \mc A$ has the ``proof history''
$\ket\Phi=\sum_{t=0}^T U_t\cdots U_1\ket\psi_I\ketanc\ket{t}_T$ as its
ground state, where the third register is used as a clock.  The
Hamiltonian $H=H_\mathrm{init}+\sum_{t=1}^{T}
H_\mathrm{evol}(t)+H_\mathrm{final}$ has three types of
terms: $H_\mathrm{init} = \one\otimes(\one-\proj{\bm
0}_A)\otimes \proj{0}_T$ makes sure the ancilla is initialized, 
$H_\mathrm{evol}(t)=-U_t\otimes \ket{t}\!\bra{t-1}_T+\mathrm{h.c.}$
ensures proper evolution from $t-1$ to $t$, and
$H_\mathrm{final}=\Pi_{U[\mc R]}\otimes \proj{T}_T$ gives an
energy penalty to states $\ket\Phi$ built from proofs
$\ket\psi_I\in\mathcal R$.
Note that our $H_\mathrm{final}$ differs from the
usual choice $\proj{0}_1\otimes \one \otimes \proj{T}_T$ and is in fact
non-local;
as we show in the Supplementary Material, this does not significantly
change the relevant spectral properties (in particular, we keep the
$1/\poly(n)$ gap, and the ground state subspace is split at most
exponentially). With this choice of $H_\mathrm{final}$, $H$ acts
independently on the subspaces spanned by $\{U_t\cdots
U_1\ket{\psi}_I\ket{\bm x}_A\ket{t}_T\}_{t=0,\dots,T}$ for any
$\ket\psi\in \mc A$ or $\ket\psi\in \mc R$, and $\ket{\bm x}_A$
the computational basis, and the restriction of $H$ to any of these
subspaces describes a random walk which is characterized by the choice of
$\ket\psi_I$ and the number of $1$'s in $\ket{\bm x}_A$.
These cases can be analyzed independently (see Supplementary Material),
and it follows that $H$ has a $\dim \mc A$-fold degenerate ground state
space with a $1/\poly(n)$ gap above, proving \sBQP-hardness of \sLH.

This shows that \sLH\ is \sBQP-hard for a Hamiltonian which is a sum of
$\log\, T$-local terms (i.e., each term acts on $\log\, T$ sites),
as the clock register is of size $\log\, T$.  In order to
obtain a $k$-body Hamiltonian, Kitaev suggested to
use a unary encoding of the clock (i.e., $\ket{t}_T$ is encoded as
$\ket{1\cdots10\cdots0}$, with $t$ 1's), so that each Hamiltonian term
only acts on three qubits of the clock.  However, this makes the Hilbert
space of the clock too big, and terms need to be added to the
Hamiltonian to penalize illegal clock configurations.
These terms divide the Hilbert space into two parts, $\mc
H_\mathrm{legal}\oplus \mc
H_\mathrm{illegal}$. Here, $\mc H_\mathrm{legal}$ contains only legal
clock states, whereas $\mc
H_\mathrm{illegal}$ contains only configurations with illegal
clock states~\cite{kitaev:book,kitaev:qma}. Since no Hamiltonian term
couples these two subspaces, the Hamiltonian can be analyzed independently
on the two subspaces. It turns out that its restriction to $\mc
H_\mathrm{illegal}$ has an at least $1/\poly(n)$ higher energy, while
on $\mc H_\mathrm{legal}$,  the Hamiltonian is exactly the
same as before. Thus, one finds that the new Hamiltonian still has the right
number of ground states, and a $1/\poly(n)$ spectral gap.  The very same
argument applies in the case of 1D Hamiltonians, using the
\QMA-construction of Ref.~\cite{aharonov:1d-qma}: Again, the Hamiltonian
acts independently on a ``legal'' and an ``illegal'' subspace, where the
latter has a polynomially larger energy, and the former reproduces the
(low-energy) spectrum of the original Hamiltonian~\cite{schuch:mps-gap-np}.  

An alternative way to prove \QMA-hardness on restricted interaction graphs
is to use so-called \emph{perturbation gadgets}, which yield the
Hamiltonian of the Kitaev construction above from a perturbative
expansion; in
particular, this way one can show \QMA-hardness of Pauli-type
nearest-neighbor Hamiltonians on a 2D square lattice of
qubits~\cite{terhal:lh-2d-qma}. As shown in Ref.~\cite{kempe:3local-qma},
such gadgets do in fact approximately preserve the whole low-energy part
of the spectrum,  and thus, our \sBQP--hardness proof for \sLH\ still
applies to these classes of Hamiltonians. It should be noted, however,
that since the eigenvalues are only preserved up to an error
$1/\poly(n)$, the splitting of the ground state space will now be of order
$1/\poly(n)$; however, it can still be chosen to be polynomially smaller
than the spectral gap.

\emph{Quantum vs.\ classical counting complexity.---}%
As we have seen, the quantum counting class \sBQP{} exactly captures the
difficulty of counting the degeneracy of ground states and computing the
density of states of local quantum Hamiltonians. In the following, we will
relate \sBQP{} to classical counting classes and
prove that \sBQP{} is equal to \sP, counting the number of satisfying
inputs to a boolean function~%
\footnote{%
Formally speaking, the
reduction from \sBQP{} to \sP{} is \emph{weakly parsimonious},
i.e., for any function $f\in\sBQP$ there exist polynomial-time computable
functions $\alpha$ and $\beta$, and a function $g\in\sP$, such that $f =
\alpha\circ g \circ \beta$. This differs from Karp reductions where no
postprocessing is allowed, $\alpha=\mathrm{Id}$, but it still only
requires a single call to a \sP{} oracle.
}.
In physical terms, this shows that counting
the number of ground states or determining the density of states for a
quantum Hamiltonian is not harder than either problem is for a classical
Hamiltonian.

Clearly, \sP{} is contained in \sBQP, as the latter includes classical
verifiers.  It remains to be shown that any \sBQP{} problem can be
solved by computing a \sP{} function.  We start from a verifier operator
$\Omega$, Eq.~\eqref{eq:Omega}, and wish to show that the dimension of its
accepting subspace, i.e., the subspace $\mc A$ with eigenvalues $\ge a$,
can be computed by counting satisfying inputs to some efficiently
computable boolean function. Using amplification, we can ensure
that
$\abs{\dim{\mathcal A} - \tr \Omega}\le \tfrac14$,
i.e., we need to compute $\tr \Omega$ to accuracy $\tfrac14$. This can be
done using a ``path integral'' method, which has been used previously to
show containments of quantum classes in the classical classes \PP{} and
\sP~(see e.g.\ \cite{adleman:pathintegral}). We rewrite $\tr
\Omega=\sum_\zeta f(\zeta)$ as a sum over products of transition probabilities
along a path $\zeta\equiv (i_0,\dots,i_N,j_1,\dots,j_N)$,
where
\begin{align}
\label{E:pathintegral}
f(\zeta) =& 
	    \bra{i_0}_I \braanc U_1^\dagger\proj{j_1} U_1^\dagger \cdots 
	    U_T^\dagger \ket{j_T} \times \\
	&\quad\bra{i_T}\big[{\proj 0}_1\otimes\one\big]\proj{i_T}
	     U_T \cdots  U_1  \ket{i_0}_I \ketanc \nonumber
\end{align}
(cf.~Fig.~\ref{F:circuit}). Since any such sum over an
efficiently computable $f(\zeta)$ can be determined by counting the
satisfying inputs to some boolean formula (see the Supplementary Material 
for details), it follows that $\Omega$ can be computed
using a single query to a black box solving \sP{} problems.

\emph{Summary and discussion.---}%
In this work, we considered two problems: Computing the density of states
and computing the ground state degeneracy of a local Hamiltonian of a spin
system.  In order to capture the computational difficulty of these
problems we introduced the quantum complexity class \sBQP, the counting
version of the class \QMA. We proved that this complexity class exactly
captures the difficulty of our two problems, even when restricting to
local Hamiltonians on 2D lattices of qubits or to
1D chains, since all these problems are complete problems for
the class \sBQP~\footnote{Note that computing the density of states to
multiplicative accuracy is less difficult, see
Refs.~\cite{brandao:thesis,osborne:1d-density-of-states}.}.

We have further shown that \sBQP{} is no harder than its
classical counterpart \sP. In particular this implies that 
computing the density of states is no harder for quantum Hamiltonians 
than it is for classical ones.  While this quantum-classical equivalence 
might seem surprising at the Hamiltonian level, 
it should be noted that the classes \sP{} and \PP{} quite 
often form natural ``upper bounds'' for many quantum \emph{and} 
classical problems. 

What about the problem of computing the density of states for fermionic
systems, such as many-electron systems? On the one hand, this problem will
be still in \sBQP{} and thus \sP, since any local fermionic Hamiltonian can
be mapped via the Jordan-Wigner transform to a (non-local) Hamiltonian
on a spin system, whose energy can still be estimated efficiently by a
quantum circuit~\cite{liu:n-rep-qma}. On the other hand, hardness of the
problem for \sBQP\ can be shown e.g.\ by using the \sBQP-hardness of \sLH, and
encoding each spin using one fermion in two modes, similar
to~\cite{liu:n-rep-qma}. Thus, computing the density of states for
fermionic systems is a \sBQP-complete problem as well.

\emph{Acknowledgments.---}%
We thank S.\ Aaronson, D.~Gottes\-man, D.\ Gross, Y.\ Shi, M.\ van
den Nest, J.~Watrous and S.\ Zhang for helpful discussions and comments.
Research at PI is supported by the Government of Canada through Industry
Canada and by the Province of Ontario through the Ministry of Research and
Innovation.  NS is supported by the Gordon and Betty Moore Foundation
through Caltech's Center for the Physics of Information and NSF Grant No.
\mbox{PHY-0803371}.  

\emph{Note added.---}%
After completion of this work, we learned that Shi and
Zhang~\cite{shi-zhang-private} have independently defined \sBQP\ and
shown its relation to $\sP$ using the same technique.

\onecolumngrid

\vspace*{1.0cm}

\appendix
\section*{SUPPLEMENTARY MATERIAL}

\vspace*{0.5cm}

\twocolumngrid

\subsection{Quantum Complexity Classes}

We start with the relevant definitions.  Let $x$ be a binary string.
Then, we denote by the verifier $V\equiv V_x$ a quantum circuit of length
$T=\poly(|x|)$, $U=U_T\cdots U_1$ (with local gates $U_t$) acting on
$m=\poly(|x|)$ qubits, which is generated uniformly from $x$.  
The verifier takes an $n=\poly(|x|)$ qubit quantum state $\ket{\psi}_I$ as
an input (we will express everything in terms of $n$ instead of $|x|$ in
the following), together with $m-n$ initialized ancillas,
$\ketanc\equiv\ket{0\cdots0}_A$, applies $U$, and finally measures the
first qubit in the $\{\ket{0}_1,\ket{1}_1\}$ basis to return $1$ (``proof
accepted'') or $0$ (``proof rejected''). 
The
acceptance probability for a proof $\ket\psi$ is then given by
$p_\mathrm{acc}(V(\psi)):= \bra\psi \Omega \ket\psi$, with
\begin{equation} \label{app:eq:Omega} 
\Omega=(\one_I\otimes\braanc) U^\dagger
        (\proj{1}_1\otimes\one) U(\one_I\otimes\ketanc)\, .
\end{equation}

\begin{definition}[\QMA]
Let $0\le a,b\le 1$ s.th.\ $ a-b > \frac{1}{p(n)} $ for some polynomial
$p(n)$. A language $L$ is in $\emph{\textsf{QMA}}(a,b)$ if there exists a
verifier s.th.
\begin{align*}
x\in L\ \Rightarrow & \ \lambda_{\max}(\Omega)>a\\
x\notin L\ \Rightarrow & \ \lambda_{\max}(\Omega)<b\ .
\end{align*}
\end{definition}

\begin{definition}[\sBQP]
\label{sBQP1}  
Let $0\le a,b\le 1$ s.th.\ $a-b>1/\poly(n)$, and
let $\Omega$ be a verifier map with no eigenvalues between $a$ and $b$.
Then, the class $\emph{\sBQP}(a,b)$ consists of all problems of the form
``compute the dimension of the space spanned by all eigenvectors of
$\Omega$ with eigenvalues $\ge a$''\ .
\end{definition}

An alternative definition for \sBQP{} is the following:

\begin{definition}[\sBQP, alternate definition]
\label{sBQP2}
Consider a verifier $\Omega$ with the property that there exist subspaces
$\mathcal A\oplus \mathcal R = \mathbb C^{N}$ ($N=2^n$) such that
$\bra{\psi}\Omega\ket\psi\ge a$ for all $\ket\psi\in \mathcal A$, and
$\bra{\psi}\Omega\ket\psi\le b$ for all $\ket\psi\in \mathcal R$, where
again $a-b>1/\poly(n)$.  Then \emph{\sBQP{}$(a,b)$} consists of all problems of the
form ``compute $\dim \mc A$''.
\end{definition}

Note that $\dim\,\mc A$ is well-defined: Consider two decompositions
$\mathbb C^{N} =\mathcal A\oplus \mathcal R$ and $\mathbb C^{N} =
\mathcal A'\oplus \mathcal R'$.  Without loss of generality, if we assume
\ $\dim\mathcal A > \dim \mathcal A'$, it follows $\dim \mc A + \dim \mc
R' > N$, and thus there exists a non-trivial $\ket\mu\in\mc A\cap \mc
R'$, which contradicts the definition.

\begin{theorem}
Definition \ref{sBQP1} and Definition \ref{sBQP2} are equivalent.
\end{theorem}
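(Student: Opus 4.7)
The plan is to prove the two implications separately, using the fact that $\Omega$ is Hermitian (indeed positive semi-definite, being of the form $M^\dagger P M$ with $P=\proj{1}_1\otimes\one$), so the spectral theorem applies. Both definitions will be seen to assign the same integer $d_A$ to $\Omega$, which is what is needed for them to define the same class and the same counting problem.

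First I would establish Definition~\ref{sBQP1} $\Rightarrow$ Definition~\ref{sBQP2}. Given the spectral gap $(b,a)$ from Def.~\ref{sBQP1}, simply let $\mc A$ be the span of the eigenvectors of $\Omega$ with eigenvalues $\ge a$ and $\mc R$ the span of those with eigenvalues $\le b$. These two subspaces are then orthogonal, span $\mathbb C^N$, and the Rayleigh-quotient bounds $\bra\psi\Omega\ket\psi\ge a$ on $\mc A$ and $\bra\psi\Omega\ket\psi\le b$ on $\mc R$ follow from a one-line expansion in the eigenbasis. The dimension $\dim \mc A$ is clearly the same integer returned by Def.~\ref{sBQP1}.

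The converse Def.~\ref{sBQP2} $\Rightarrow$ Def.~\ref{sBQP1} is the more delicate direction, because the direct sum $\mc A\oplus \mc R$ is not assumed to be orthogonal. The idea is to apply the Courant-Fischer min-max principle (see e.g.~\cite{bhatia:book}) to the Hermitian operator $\Omega$. Ordering the eigenvalues as $\lambda_1\ge\cdots\ge\lambda_N$, one has
\begin{equation*}
\lambda_k=\max_{\dim S=k}\ \min_{\ket\psi\in S,\,\|\psi\|=1}\bra\psi\Omega\ket\psi ,
\end{equation*}
and the dual characterization
\begin{equation*}
\lambda_k=\min_{\dim T=N-k+1}\ \max_{\ket\psi\in T,\,\|\psi\|=1}\bra\psi\Omega\ket\psi .
\end{equation*}
Setting $S=\mc A$ with $d_A:=\dim\mc A$ yields $\lambda_{d_A}\ge a$, while setting $T=\mc R$ with $k=d_A+1$ (so that $N-k+1=\dim\mc R$) yields $\lambda_{d_A+1}\le b$. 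Combined with the promise $a-b>1/\poly(n)$, this forces exactly $d_A$ eigenvalues to lie $\ge a$, the remaining $N-d_A$ to lie $\le b$, and none in the gap $(b,a)$. Hence $\Omega$ satisfies Def.~\ref{sBQP1}, and the counted integer agrees with $\dim \mc A$.

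The only subtle point I expect is the non-orthogonality of the decomposition $\mc A\oplus \mc R$ in Def.~\ref{sBQP2}: one might naively want to replace $\mc R$ by the orthogonal complement of $\mc A$, which in general violates the Rayleigh bound. The key observation that resolves this is that the min-max characterization takes the optimum over arbitrary test subspaces of the prescribed dimension and thus is insensitive to whether $\mc A$ and $\mc R$ are mutually orthogonal. Once that point is in hand, the theorem reduces to two one-line applications of min-max together with the well-definedness of $\dim\mc A$ already noted in the excerpt.
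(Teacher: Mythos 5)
Your proposal is correct and follows essentially the same route as the paper: the forward direction by taking $\mc A$ to be the span of the eigenvectors with eigenvalues $\ge a$, and the converse by applying both the max-min and min-max forms of the Courant--Fischer principle to the (possibly non-orthogonal) decomposition $\mc A\oplus\mc R$ to deduce $\lambda_{\dim\mc A}\ge a>b\ge\lambda_{\dim\mc A+1}$. Your remark that the min-max characterization is what absorbs the non-orthogonality of the decomposition is exactly the point the paper's argument relies on.
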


\begin{proof}
To show that Definition \ref{sBQP1} implies Definition \ref{sBQP2}, let
$\mathcal A$ be spanned by the eigenvectors with eigenvalues $\ge a$.  To
show the converse, we use
the minimax principle for eigenvalues~\cite{bhatia:book}, which 
states that the $k$th largest eigenvalue $\lambda_k$ of a Hermitian
operator $\Omega$ in an $N$-dimensional Hilbert space can be obtained from
either of the equivalent optimizations
\begin{align}
	\lambda_k(\Omega) 
	& = \max_{\mc M_k} \min_{\ket{x} \in \mc M_k} \bra{x} \Omega \ket{x} 
		\label{E:maxmin}\\
	& = \min_{\mc M_{N-k+1}} \max_{\ket{x} \in \mc M_{N-k+1}} \bra{x} \Omega \ket{x}
		\label{E:minmax} \,,
\end{align}
where $\mc M_k$ is a subspace of dimension $k$, and $\ket{x}$ is a unit
vector. Now notice that Def.~\ref{sBQP2} implies that
\begin{align}
	\min_{\ket{x} \in \mc A} \bra{x} \Omega \ket{x} \ge a \quad \mbox{and} \quad
	\max_{\ket{x} \in \mc R} \bra{x} \Omega \ket{x} \le b \,.
\end{align}
Next, consider the minimax theorem for $k=\dim \mc A$. From
Eq.~(\ref{E:maxmin})
we have \begin{align*} \lambda_{\dim \mc A} & =
        \max_{\mc M_{\dim \mc A}}\; \min_{\ket{x} \in \mc M_{\dim \mc A}}
        \bra{x} \Omega \ket{x} \\ & \ge \min_{\ket{x} \in \mc A} \bra{x}
        \Omega \ket{x} \\ & \ge a \,.  \end{align*}
Now consider the case that $k=\dim \mc A + 1$. From Eq.~(\ref{E:minmax}), using the fact that $N-(\dim \mc A + 1)+1 = \dim \mc R$, we have
\begin{align*}
	\lambda_{\dim \mc A+1} & = \min_{\mc M_{\dim \mc R}}\; 
            \max_{\ket{x} \in \mc M_{\dim \mc R}} \bra{x} \Omega \ket{x} \\
	& \le \max_{\ket{x} \in \mc R} \bra{x} \Omega \ket{x} \\
	& \le b \,.
\end{align*}
Thus we have
\begin{align}
	\lambda_{\dim \mc A} \ge a > b \ge \lambda_{\dim \mc A+1} \,,
\end{align}
since $a-b \ge 1/\poly(n)$. It follows that $\lambda_{\dim \mc A}$ is the smallest eigenvalue of $\Omega$ which is still larger than $a$, and therefore the span of the first $\dim \mc A$ eigenvectors of $\Omega$ is equal to the span of all eigenvectors with eigenvalue $\ge a$. The equivalence follows.
\end{proof}

The class $\sBQP(a,b)$ inherits the useful property of strong error
reduction from \QMA{}: the thresholds $(a,b)$ can be amplified to
$(1-2^{-r},2^{-r})$, $r=\poly(n)$ while keeping the size of the proof:

\begin{theorem}
        \label{thm:strongamp}
Let $\emph{\sBQP}_n(a,b)$ denote \emph{\sBQP{}} with an $n$ qubit
witness. Then $\emph{\sBQP}_n(a,b) \subset
\emph{\sBQP}_n(1-2^{-r},2^{-r})$ for every $r \in \poly(n)$.
\end{theorem}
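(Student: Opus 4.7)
The plan is to lift the Marriott--Watrous strong amplification procedure~\cite{marriott:qma-strongamp}, originally devised for \QMA, to the counting setting, and to verify that it acts diagonally on the eigenspaces of $\Omega$ so that the dimension of the accepting subspace is preserved. Weak amplification (running several independent copies of the verifier on tensored witnesses and taking a majority vote) is unavailable for \sBQP{} precisely because it can change $\dim\mc A$ in an uncontrolled way; strong amplification never duplicates the witness register, which makes it the natural tool here.

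First, I would recast the verifier as a product of two projectors on the full $m$-qubit Hilbert space, $\Pi_{\mathrm{in}}=\one_I\otimes\proj{\bm 0}_A$ and $\Pi_{\mathrm{out}}=U^\dagger(\proj{1}_1\otimes\one)U$, so that $\Omega$ is the restriction of $\Pi_{\mathrm{in}}\Pi_{\mathrm{out}}\Pi_{\mathrm{in}}$ to the witness register. By Jordan's lemma~\cite{Jordan1875,Halmos1969}, the full Hilbert space decomposes orthogonally into subspaces of dimension at most two that are jointly invariant under $\Pi_{\mathrm{in}}$ and $\Pi_{\mathrm{out}}$. In each two-dimensional block the two projectors are onto lines making some principal angle $\theta$, and $\cos^2\theta$ is precisely the corresponding eigenvalue $\lambda$ of $\Omega$; one-dimensional blocks contribute eigenvalues $0$ or $1$. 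The promise in Definition~\ref{sBQP1} therefore separates the blocks into a ``good'' family ($\cos^2\theta\ge a$) and a ``bad'' family ($\cos^2\theta\le b$), of cardinalities $\dim\mc A$ and $\dim\mc R$.

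Second, I would construct the amplified verifier $\Omega'$ on the same witness register via the standard Marriott--Watrous construction: alternately apply $\Pi_{\mathrm{out}}$ and $\Pi_{\mathrm{in}}$ (implemented by $U$, $U^\dagger$ and CNOT copies of the measurement bits into fresh ancillas) for $k=O(r/(a-b)^2)$ rounds, and then use a simple classical post-processing circuit (e.g.\ the majority/pattern rule of~\cite{marriott:qma-strongamp}, equivalently a Chebyshev-polynomial thresholding) on the recorded outcomes to produce a single accept bit. Because every step of this process maps each Jordan block into itself, the whole circuit acts block-diagonally, and the acceptance probability on an eigenvector of $\Omega$ with eigenvalue $\lambda$ is a fixed monotone polynomial $p_k(\lambda)$ satisfying $p_k(\lambda)\ge 1-2^{-r}$ for $\lambda\ge a$ and $p_k(\lambda)\le 2^{-r}$ for $\lambda\le b$. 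Hence, as an operator on the witness register, $\Omega'$ shares the eigenvectors of $\Omega$, with eigenvalues merely relabeled by $p_k$.

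The main technical obstacle is establishing this block-wise, monotone polynomial behaviour of the amplified acceptance probability; this is the technical core of~\cite{marriott:qma-strongamp}, and it is what makes the analysis go through simultaneously on every eigenvector rather than only on a single fixed witness. Once it is in place, the \sBQP{} conclusion is automatic: no eigenvalue of $\Omega'$ lies in the new grace interval $(2^{-r},1-2^{-r})$, the ``good'' Jordan blocks of $\Omega$ are in bijection with those of $\Omega'$, and therefore $\dim\mc A$ is unchanged. This shows $\sBQP_n(a,b)\subset\sBQP_n(1-2^{-r},2^{-r})$ for every $r\in\poly(n)$.
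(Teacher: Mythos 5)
Your proposal is correct and follows essentially the same route as the paper, which simply invokes the Marriott--Watrous strong amplification procedure and the observation that it acts on each eigenvalue of $\Omega$ independently (hence preserves $\dim\mc A$). You additionally spell out the Jordan-block decomposition underlying that observation, which the paper leaves implicit in the citation; this is a faithful and complete reconstruction of the intended argument.
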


This follows directly from the strong amplification procedure presented
in~\cite{marriott:qma-strongamp}, which describes a procedure to amplify
any verifier map $\Omega$ such that any eigenvalue above $a$ (below $b$)
is shifted above $1-2^{-r}$ (below $2^{-r}$) at an overhead polynomial in
$r$. 

Using Thm.~\ref{thm:strongamp}, we will write \sBQP\ instead of
$\sBQP(a,b)$ from now on, where $a-b>\poly(n)$, and $a$, $b$ can be
exponentially close to $1$ and $0$, respectively.

\subsection{The Complexity of the Density of States}

We now use the class \sBQP{} to characterize the complexity of the density of states
problem and the problem of counting the number of ground states of a local
Hamiltonian. We start by defining these problems, as well as the notion of local Hamiltonian, and then show that both problems are \sBQP{}-complete.

\begin{definition}[$k$-local Hamiltonian]\label{def:klocalH}
Given a set of $\poly(n)$ quantum spins each with dimension bounded by a
constant, a Hamiltonian $H$ for the system is said to be $k$-local if $H =
\sum_i H_i$ is a sum of at most $\poly(n)$ Hermitian operators $H_i$,
$\|H_i\|\le 1$,  each of which acts nontrivially on at most $k$ spins. 
\end{definition}

Note that $k$-local does not imply any geometric locality, only that each spin interacts with at most $k-1$ other spins for any given interaction term. However, we restrict ourselves to $k = O\bigl(\log(n)\bigr)$ so that each $H_i$ can be specified by an efficient classical description.

\begin{definition}[\DOS]\label{def:dos}
Let $E_2 - E_1 > 1/\poly(n)$, $\Delta=(E_2-E_1)/\poly(n)$, and 
let $H = \sum_i H_i$ be a $k$-local Hamiltonian 
such that $H$ has no eigenvalues in the intervals
$[E_1-\tfrac\Delta2,E_1+\tfrac\Delta2]$ and
$[E_2-\tfrac\Delta2,E_2+\tfrac\Delta2]$. Then, the problem \DOS{} (density
of states) is to compute the number of orthogonal eigenstates with
eigenvalues in the interval $[E_1,E_2]$. 
\end{definition}

\begin{definition}[\sLH]
Let $E_0\le E_1<E_2$, with $E_2-E_1> 1/\poly(n)$, and let
$H = \sum_i H_i$ be a $k$-local Hamiltonian s.th.\ 
$H\ge E_0$, and $H$ has no eigenvalues between $E_1$ and $E_2$.
Then, the problem $\emph{\sLH}\equiv\emph{\sLH}(E_1-E_0)$ (sharp local
Hamiltonian) is to compute the dimension of the eigenspace with
eigenvalues $\le E_1$.
\end{definition}

Note that \sLH{} depends on the ``energy splitting'' $E_1-E_0$ 
of the low-energy subspace. In particular, for $E_1-E_0=0$, $\sLH(0)$
corresponds to computing the degeneracy of the ground state subspace. As we
will see in what follows, the class $\sLH(\sigma)$ is the same for any
splitting $\exp(-\poly(n))\le \sigma \le 1/\poly(n)$.

We now show that \sLH\ and \DOS\ are both \sBQP-complete. We do so by
giving reductions from $\sLH(1/\poly(n))$\ to \DOS, from \DOS\ to \sBQP,
and from \sBQP\ to $\sLH\bigl(\exp(-\poly(n))\bigr)$; this will at the same time
prove the claimed independence of $\sLH(\sigma)$ of the 
splitting $\sigma$.

\begin{theorem}
$\sLH(1/\poly(n))$ reduces to \DOS.
\end{theorem}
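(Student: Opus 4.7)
The plan is to feed the \sLH{} Hamiltonian directly into a single \DOS{} query, choosing the energy thresholds so that the \DOS{} window isolates exactly the low-energy subspace. Given an \sLH{} instance $(H,E_0,E_1,E_2)$ with $H\ge E_0$, no eigenvalues of $H$ in $(E_1,E_2)$, gap $E_2-E_1>1/q(n)$, and splitting $E_1-E_0=1/\poly(n)$, I would invoke \DOS{} on the same $H$ with new thresholds $E_1':=E_0-1$ and $E_2':=(E_1+E_2)/2$. By construction the interval $[E_1',E_2']$ captures exactly the eigenvalues in $[E_0,E_1]$: the lower endpoint sits strictly below the spectrum (since $H\ge E_0$), the upper endpoint sits strictly inside the gap $(E_1,E_2)$, and every remaining eigenvalue of $H$ is $\ge E_2>E_2'$. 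The count returned by \DOS{} is therefore precisely the dimension of the eigenspace of $H$ with eigenvalues $\le E_1$, which is the sought \sLH{} answer.

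The only nontrivial thing to verify is the \DOS{} promise: the grace intervals of width $\Delta=(E_2'-E_1')/\poly(n)$ centered at $E_1'$ and $E_2'$ must contain no eigenvalues of $H$. For the lower interval, $H\ge E_0=E_1'+1$ means any $\Delta<2$ places $[E_1'-\Delta/2,E_1'+\Delta/2]$ entirely below the spectrum. For the upper interval, since $H$ has no eigenvalues in $(E_1,E_2)$ and $E_2'$ is the midpoint of that gap, any $\Delta<E_2-E_1$ keeps $[E_2'-\Delta/2,E_2'+\Delta/2]$ inside the gap. Thus it suffices to enforce $\Delta<\min(2,\,E_2-E_1)=1/q(n)$ for large $n$.

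The potential obstacle is reconciling this upper bound on $\Delta$ with the definitional constraint $\Delta=(E_2'-E_1')/\poly(n)$. Since $H=\sum_i H_i$ has $\poly(n)$ terms of norm at most $1$, its spectrum lies in a polynomially bounded interval, so $E_2'-E_1'=O(\poly(n))$. I can therefore pick the polynomial in the denominator of $\Delta$ large enough, growing faster than $q(n)\cdot(E_2'-E_1')$, to make $\Delta<1/q(n)$ and thereby satisfy both grace-interval conditions simultaneously. With this choice the \DOS{} instance is valid, a single \DOS{} query returns the \sLH{} count, and the reduction goes through. Beyond this mild polynomial bookkeeping, I do not expect any substantive obstacle; notice also that the argument is insensitive to the precise value of the splitting $E_1-E_0$, which foreshadows why $\sLH(\sigma)$ will turn out to be independent of $\sigma$ throughout the claimed range.
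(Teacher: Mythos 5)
Your reduction is correct and is essentially the paper's: both feed the \sLH{} Hamiltonian unchanged into a single \DOS{} query, merely translating the thresholds so that the counting window covers the low-energy subspace and its endpoints' grace intervals fall below the spectrum and inside the spectral gap, respectively. Your particular choice of thresholds ($E_0-1$ and the gap midpoint) and the explicit check that $\Delta$ can be taken small enough are just a more carefully spelled-out version of the paper's one-line parameter matching.
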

\begin{proof}
If we denote the parameters of the \sLH{} problem by $\tilde{E}_0,
\tilde{E}_1, \tilde{E}_2$, then we can simply relate them to the
parameters $E_1, E_2, \Delta$ of a \DOS{} problem by $\Delta = \tilde{E}_2
- \tilde{E}_1$, $E_1 = \tilde{E}_0-\tfrac12\Delta$ and $E_2 =\tilde
E_1+\tfrac12\Delta$, and the result follows directly.
\end{proof}

\begin{theorem}
\emph{\DOS{}} is contained in \emph{\sBQP}.
\end{theorem}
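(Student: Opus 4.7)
The plan is to construct a \sBQP{} verifier $\Omega$ whose accepting subspace (in the sense of Definition~\ref{sBQP2}) coincides with the span of the $H$-eigenstates with eigenvalues in $[E_1,E_2]$, so that $\dim \mc A$ is exactly the number required by \DOS. The verifier acts as follows on an input proof $\ket\psi_I$: (i)~run phase estimation of $e^{iH\tau}$ on $\ket\psi_I$, with a suitable constant $\tau$ and poly$(n)$ bits of precision, storing the estimated eigenvalue in an ancilla register; (ii)~set the accept qubit to $\ket1_1$ iff the estimated eigenvalue lies in $[E_1,E_2]$. Since $H$ is $k$-local with $k=O(\log n)$, $e^{iH\tau}$ can be implemented efficiently (e.g.\ by a Trotter--Suzuki decomposition), and phase estimation can be made accurate to within $\Delta/4$ with failure probability at most $\epsilon=1/3$ using poly$(n)$ ancillas.

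Next I would analyze the spectrum of the resulting $\Omega$. The crucial point is that the circuit $U$ leaves the input register untouched on $H$-eigenstates: on input $\ket{\psi_j}_I\ketanc$ with $H\ket{\psi_j}=\lambda_j\ket{\psi_j}$, the final state factorizes as $\ket{\psi_j}_I\otimes\ket{\phi_j}_A$. Consequently, for any $\ket\psi = \sum_j c_j \ket{\psi_j}$ all cross-terms vanish in the input register and
\begin{equation*}
\bra\psi\Omega\ket\psi = \sum_j |c_j|^2 p_j,
\end{equation*}
where $p_j$ is the acceptance probability on the eigenstate $\ket{\psi_j}$. By the promise, every $\lambda_j\in[E_1,E_2]$ actually lies in $[E_1+\tfrac\Delta2,E_2-\tfrac\Delta2]$, so phase estimation returns a value in $[E_1,E_2]$ with probability $\ge 1-\epsilon$, giving $p_j\ge 1-\epsilon$. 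Likewise, any $\lambda_j\notin[E_1,E_2]$ is in fact outside $[E_1-\tfrac\Delta2,E_2+\tfrac\Delta2]$, giving $p_j\le\epsilon$.

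To conclude, I would invoke the alternate definition of \sBQP{} (Definition~\ref{sBQP2}). Let $\mc A$ be the span of all $H$-eigenstates with $\lambda_j\in[E_1,E_2]$, and $\mc R$ its orthogonal complement, which is spanned by $H$-eigenstates with $\lambda_j\notin[E_1,E_2]$. Then $\mc A\oplus\mc R=\mathbb C^{2^n}$, $\bra\psi\Omega\ket\psi\ge 1-\epsilon$ for all $\ket\psi\in\mc A$, and $\bra\psi\Omega\ket\psi\le\epsilon$ for all $\ket\psi\in\mc R$, with $(1-\epsilon)-\epsilon>1/\poly(n)$. Hence $\Omega$ is a valid \sBQP{} verifier and $\dim\mc A$ is the answer to the \DOS{} instance, so \DOS$\in$\sBQP.

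The main obstacle is the convex-combination structure of $\bra\psi\Omega\ket\psi$ for general superpositions across the energy interval: a priori one might worry that cross-terms between different $\ket{\psi_j}$ could corrupt the acceptance probability on $\mc A$ or $\mc R$ and invalidate the subspace promise. The observation that saves the argument is that $U$ acts as the identity on the input register when restricted to $H$-eigenstates, so the apparent off-diagonal contributions to $\Omega$ vanish for orthogonal eigenstates and the bounds on the individual $p_j$'s lift directly to bounds on all of $\mc A$ and $\mc R$. The remaining technicalities---bounding the tails of the phase-estimation distribution and absorbing the Trotter error into $\Delta$---are routine; they could alternatively be replaced by the explicit elementary circuit promised in the main text and expanded in the Supplementary Material.
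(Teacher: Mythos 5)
Your proof is correct, and it is essentially a fleshed-out version of the phase-estimation argument that the main text only sketches; the paper's \emph{detailed} proof in the Supplementary Material deliberately takes a more elementary route. There, one forms $H' = \nu\bigl(H^2 - (E_1+E_2)H + E_1E_2\bigr)$, so that the target window $[E_1,E_2]$ becomes the negative-energy subspace of $H'$ with a $1/\poly(n)$ spectral gap around zero, and then uses Kitaev's one-random-term measurement circuit, for which $\Omega = \tfrac12\one - \tfrac{H'}{2m}$ \emph{exactly}. This buys two things your route has to work for: $\Omega$ is manifestly diagonal in the $H$-eigenbasis with no circuit-approximation error whatsoever, and no Hamiltonian simulation is needed; the price is a gap $a-b$ of only $1/\poly(n)$, which is then restored by strong amplification. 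Your route gets a constant gap directly and makes the counting interpretation transparent, but its correctness hinges on the factorization $U\ket{\psi_j}_I\ketanc = \ket{\psi_j}_I\ket{\phi_j}_A$, which holds only for the \emph{ideal} controlled-$e^{iH\tau}$ gates. Two small points deserve sharper treatment: (i) with a Trotterized circuit the off-diagonal matrix elements of $\Omega$ in the $H$-eigenbasis reappear, and the right way to dispose of them is an operator-norm bound $\norm{\Omega-\Omega_{\mathrm{ideal}}} = O(\delta)$ with $\delta = o(1)$, which degrades $a$ and $b$ by $O(\delta)$ while preserving the subspace promise of Definition~\ref{sBQP2} --- this is not the same as ``absorbing the error into $\Delta$''; (ii) since $\norm{H}$ can be $\poly(n)$, the time step $\tau$ must be taken of order $1/\poly(n)$ (or $H$ rescaled) to avoid phase wrap-around in the estimation register. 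Neither point is a genuine gap, and either approach establishes the theorem.
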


\begin{proof}
We start with a $k$-local Hamiltonian $H$ as in Def.~\ref{def:dos}. Now
define a new Hamiltonian 
\[
H' := \nu(H^2 - (E_1+E_2) H + E_1E_2)\ .
\]
$H'$ is a $2k$-local Hamiltonian; here, $\nu=1/\poly(n)$ is chosen such
that each term in $H'$ is subnormalized. Any eigenvalue of $H$ in the
interval $[E_1+\tfrac\Delta2;E_2-\tfrac\Delta2]$ translates into an
eigenvalue of $H'$ which is below
\[
A:=-\nu\tfrac\Delta2(E_2-E_1-\tfrac\Delta2)\le -1/\poly(n)\ ,
\] 
whereas any eigenvalue outside $[E_1-\tfrac\Delta2;E_2+\tfrac\Delta2]$
translates into an eigenvalues of $H'$ above
\[
B:=\nu\tfrac\Delta2(E_2-E_1+\tfrac\Delta2)\ge 1/\poly(n)\ .
\]

The original \DOS\ problem now translates into counting the number of
eigenstates of $H'$ with negative energy, given a spectral gap in a 
$1/\poly(n)$ sized interval $[A;B]$ around zero.  We now use the circuit
which was used in~\cite{kitaev:qma} to prove that $\log$-local Hamiltonian
is in \QMA; it accepts any input state $\ket\psi$ with probability 
\[
p(\ket\psi) = \frac12-\frac{\bra\psi H'\ket\psi}{2m}\ ,
\]
where $m=\poly(n)$ is the number of terms in $H'$. [The idea is to 
randomly pick one term $H_i'$ in the Hamiltonian and toss a coin with
probability $(1-\bra\psi H_i'\ket\psi)/2$.] Computing the answer to the
original \DOS\ problem then translates to counting the number of states
with acceptance probability $\ge a = \tfrac12+\tfrac A{2m}$, given that
there are no eigenstates between $a$ and $b=\tfrac12+\tfrac B{2m}$, and since
$a-b=(A-B)/2m\ge1/\poly(n)$, this shows that this number can be computed in
\sBQP.
\end{proof}

\begin{theorem}
\label{thm:sLH-sBQP-hard}
\emph{\sLH$\bigl(\exp(-\poly(n))\bigr)$} is \emph{\sBQP}-hard.
\end{theorem}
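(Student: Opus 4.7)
The plan is to construct, from an arbitrary \sBQP{} verifier $V=U_T\cdots U_1$ with accepting subspace $\mc A\subseteq\mathbb{C}^{2^n}$, an $O(\log n)$-local Hamiltonian whose low-energy subspace has dimension exactly $\dim\mc A$, internal splitting at most $\exp(-\poly(n))$, and a $1/\poly(n)$ spectral gap above. Invoking Thm.~\ref{thm:strongamp} I may assume $\Omega|_{\mc A}\ge 1-2^{-\poly(n)}$ and $\Omega|_{\mc R}\le 2^{-\poly(n)}$, so in particular $\|\Omega-\Pi_{\mc A}\|\le 2^{-\poly(n)}$.

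The construction is Kitaev's history-state Hamiltonian $H^{\mathrm{id}}=H_\mathrm{init}+\sum_{t=1}^T H_\mathrm{evol}(t)+H_\mathrm{final}^{\mathrm{id}}$ on input, ancilla, and (binary) clock registers, with the standard initialization term $H_\mathrm{init}=\one_I\otimes(\one-\proj{\bm 0}_A)\otimes\proj{0}_T$ and propagation terms $H_\mathrm{evol}(t)$, but with the (a~priori non-local) ``ideal'' final term $H_\mathrm{final}^{\mathrm{id}}=\Pi_{U[\mc R]}\otimes\proj{T}_T$, where $U[\mc R]=\{U\ket\psi_I\ketanc:\ket\psi\in\mc R\}$. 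The key structural claim is that $H^{\mathrm{id}}$ preserves every ``history block'' $\mc S_{\psi,\bm x}=\mathrm{span}\{U_t\cdots U_1\ket\psi_I\ket{\bm x}_A\ket t_T:t=0,\ldots,T\}$ for $\ket\psi\in\mc A$ or $\ket\psi\in\mc R$ and $\ket{\bm x}_A$ a computational-basis ancilla: unitarity of $U$ forces $U\ket\psi_I\ket{\bm x}_A\perp U[\mc R]$ whenever $\bm x\ne\bm 0$, and $\mc A\perp\mc R$ forces $\Pi_{U[\mc R]}U\ket\psi_I\ketanc=0$ whenever $\ket\psi\in\mc A$. Restricted to each block $H^{\mathrm{id}}$ is a tridiagonal random walk on $\{0,\ldots,T\}$ with at most one boundary penalty---none if $\ket\psi\in\mc A$ and $\bm x=\bm 0$ (yielding the zero-energy history state $\ket\Phi=(T+1)^{-1/2}\sum_t U_t\cdots U_1\ket\psi_I\ketanc\ket t_T$), a unit penalty at $t=0$ from $H_\mathrm{init}$ if $\bm x\ne\bm 0$, and a unit penalty at $t=T$ from $H_\mathrm{final}^{\mathrm{id}}$ if $\ket\psi\in\mc R,\bm x=\bm 0$---so Kitaev's random-walk analysis gives an exact $\dim\mc A$-fold degenerate zero-energy ground space with a $1/\poly(n)$ gap above.

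The main obstacle is that $\Pi_{U[\mc R]}$ is non-local, so $H^{\mathrm{id}}$ is not an \sLH{} instance. My plan is to replace $H_\mathrm{final}^{\mathrm{id}}$ by the standard local choice $H_\mathrm{final}^{\mathrm{loc}}=\proj{0}_1\otimes\one\otimes\proj{T}_T$ to obtain a local $H^{\mathrm{loc}}$, and to control the deviation $\delta:=H_\mathrm{final}^{\mathrm{loc}}-H_\mathrm{final}^{\mathrm{id}}$. A direct calculation gives $\|\delta\ket\Phi\|\le 2^{-\poly(n)}/\sqrt{T+1}$ for every history state $\ket\Phi$ of an $\mc A$-witness, yielding the min--max upper bound $\lambda_{\dim\mc A}(H^{\mathrm{loc}})\le\exp(-\poly(n))$. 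For the matching lower bound $\lambda_{\dim\mc A+1}(H^{\mathrm{loc}})\ge 1/\poly(n)$---the delicate part, since $\|\delta\|$ is $O(1)$ so naive perturbation theory fails---I would conjugate by the history-map unitary $W=\sum_t U_t\cdots U_1\otimes\proj{t}_T$, transforming $H^{\mathrm{loc}}$ into $\one_{IA}\otimes H_\mathrm{clock}+P_1\otimes\proj{0}_T+P_2\otimes\proj{T}_T$ with $P_1=\one_I\otimes(\one-\proj{\bm 0}_A)$ and $P_2=U^\dagger(\proj{0}_1\otimes\one)U$. Jordan's lemma~\cite{Jordan1875,Halmos1969} applied to the projector pair $(P_1,P_2)$ decomposes the input-plus-ancilla register into joint invariant 1- and 2-dimensional subspaces, on each of which the full operator becomes an explicit clock random walk with angle-dressed boundary penalties whose spectrum can be diagonalized directly. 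The amplification bound $\|\Omega-\Pi_{\mc A}\|\le 2^{-\poly(n)}$ translates into Jordan angles $\sin\theta_k\le\exp(-\poly(n))$ for exactly $\dim\mc A$ of the 2D blocks, each contributing one eigenvalue of size at most $\exp(-\poly(n))$, while all remaining blocks have angles bounded away from zero and therefore ground-state energy $\ge 1/\poly(n)$. The $O(1)$-local, 2D, and 1D refinements mentioned in the main text then follow from the standard clock-encoding and perturbation-gadget reductions, which preserve the low-energy spectrum up to inverse-polynomial corrections.
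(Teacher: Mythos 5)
Your proposal matches the paper's proof up to and including the analysis of the ideal, non-local Hamiltonian: the same amplification, the same $H_\mathrm{final}=\Pi_{U[\mc R]}\otimes\proj{T}_T$, the same observation that the blocks $\{U_t\cdots U_1\ket\psi_I\ket{\bm x}_A\ket{t}_T\}_t$ are invariant, and the same tridiagonal-walk spectra (the paper's Lemma~\ref{L:blk-diag}). You genuinely diverge at the step that restores locality. The paper never diagonalizes the local Hamiltonian; it instead proves the one-sided operator inequality $H_\mathrm{final}^\mathrm{std}\ge H_\mathrm{final}-\sqrt\epsilon\,\one$ (Lemma~\ref{L:spec-bound}, a Jordan-angle computation for the pair $P=\proj{0}_1\otimes\one$, $Q=\Pi_{U[\mc R]}$, using $\norm{Q(\one-P)Q}\le\epsilon$ which follows from the soundness threshold). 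Hence swapping in the local final term lowers no eigenvalue by more than $\sqrt\epsilon=\exp(-\poly(n))$, so the $1/\poly(n)$ gap survives, while the $\dim\mc A$ history states rise by at most $\epsilon$. Your remark that ``naive perturbation theory fails because $\norm{\delta}=O(1)$'' is therefore only half right: two-sided perturbation fails, but the perturbation is one-sidedly exponentially small from below, which is all the gap argument needs. Your alternative---Jordan's lemma applied after $W$-conjugation to the pair $P_1=\one_I\otimes(\one-\proj{\bm 0}_A)$, $P_2=U^\dagger(\proj{0}_1\otimes\one)U$, followed by diagonalizing the resulting $2(T+1)$-dimensional dressed walks---is the standard soundness analysis of Kitaev's Hamiltonian (cf.~\cite{brandao:pursuit-for-uniqueness}) and is also sound: the eigenvalues of $\Omega$ are precisely $\cos^2\theta_k$ for the Jordan angles between $\ker P_1$ and $\ker P_2$, so amplification forces exactly $\dim\mc A$ blocks to angle $\exp(-\poly(n))$ (each contributing one exponentially small eigenvalue, with the remainder of the block at $\Omega(1/T^2)$ by perturbing within the block from $\theta=0$) and all other blocks to angle exponentially close to $\pi/2$, hence ground energy $\Omega(1/T^2)$. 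Your route costs more per-block computation but analyzes the local Hamiltonian directly; the paper's route buys a two-line gap argument at the price of the extra projector lemma. Both are valid.
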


\begin{proof}
To show \sBQP-hardness of \sLH, we need to start with an arbitrary \QMA{}
verifier circuit $U = U_T\ldots U_1$ and construct a Hamiltonian with as
many ground states as the circuit has accepting inputs. By amplification,
we can assume that the acceptance and rejection thresholds for the
verifier are $a = 1 - \epsilon$ and $b = \epsilon$, where we can choose
$\epsilon=O(\exp(-p(n)))$ for any polynomial $p(n)$. As before, let $\mc
A$ and $\mc R$ be the eigenspaces of $\Omega$ with eigenvalues $\geq a$
and $\leq b$, respectively. Define 
\begin{align}
\label{E:UR-def}
	U[\mc R] := \{U\ket{\psi}_I \ket{\mathbf{0}}_A : \ket{\psi}_I \in \mc R\}\,
\end{align}
and denote the projector onto this space by $\Pi_{U[\mc R]}$. Notice that
for any state $\ket{\chi} \in U[\mc R]$, due to our rejection threshold
$b=\epsilon$, we have \begin{align}
\label{E:chi}
	\bra{\chi}\bigl(\ket{1}\bra{1}_1\otimes\one\bigr)\ket{\chi} \le \epsilon\,.
\end{align}

We now follow Kitaev's original construction to encode a \QMA{} verifier
circuit into a Hamiltonian which has a ``proof history'' as its
ground state for any proof $\ket{\phi}_I \in \mc
A$~\cite{kitaev:book,kitaev:qma}.  That is, the ground
states of the Hamiltonian are given by
\begin{equation}
\ket{\Phi} = \sum_{t=0}^T U_t\ldots U_1\ket{\phi}_I\ket{\mathbf{0}}_A\ket{t}_T
\end{equation}
where the third register is used as a ``clock''. The Hamiltonian has the form

\begin{equation}
        \label{eq:kitaev-ham}
H = H_{\mathrm{init}} + \sum_{t=1}^T H_{\mathrm{evol}}(t) + H_{\mathrm{final}}
\end{equation}
where
\begin{itemize}
\item $H_{\mathrm{init}} = \one_I \otimes (\one - \ket{\mathbf{0}}\bra{\mathbf{0}}_A)\otimes\ket{0}\bra{0}_T$
checks that the ancilla is property initialized, penalizing states without properly initialized ancillas;
\item 
	$H_{\mathrm{evol}}(t) =  -\tfrac{1}{2} U_t \otimes
        \ket{t}\bra{t-1}_T - \tfrac{1}{2}U_t ^{\dagger} \otimes
        \ket{t-1}\bra{t}_T$\\[1ex]
    $
	\quad\qquad\qquad + \tfrac{1}{2}\openone\otimes\ket{t}\bra{t}_T +\tfrac{1}{2}
        \openone\otimes\ket{t-1}\bra{t-1}_T$\\[1ex]
checks that the propagation from time $t-1$ to $t$
is correct, penalizing states with erroneous propagation;
\item $H_{\mathrm{final}} = \Pi_{U[\mc R]}\otimes \ket{T}\bra{T}_T$ causes
        each state $\ket{\phi}$ built from an input $\ket{\psi}_{I} \in
        \mc R$ (but which nonetheless has a correctly initialized ancilla)
        to receive an energy penalty.  
\end{itemize}

As we show in Lemma~\ref{L:blk-diag}, the total Hamiltonian $H$ has a
spectral gap $1/\poly(n)$ above the $\dim\mc A$--dimensional ground state
subspace.  However,
$H_\mathrm{final}$ is not a local Hamiltonian, but as we argue in the
following, it can be replaced by the usual term
$H_{\mathrm{final}}^{\mathrm{std}} =
\ket{0}\bra{0}_1\otimes\one\otimes\ket{T}\bra{T}_T$ while keeping the
ground space dimension (up to small splitting in energies) and the
$1/\poly(n)$ spectral gap.  As we prove in 
Lemma~\ref{L:spec-bound},
\begin{equation} \label{E:spectral-ineq} H_{\mathrm{final}}^{\mathrm{std}}
        \geq H_{\mathrm{final}} - \sqrt{\epsilon}\one \,.  \end{equation}
Thus, replacing $H_\mathrm{final}$ by
$H_\mathrm{final}^\mathrm{std}$ will decrease the energy of any excited
state by at most $\sqrt{\epsilon}=O(\exp(-p(n)/2))$. (The energy of the
ground states is already minimal and cannot decrease.)
On the other hand, the energy of any proper proof history for $H$ cannot
increase by more than
\[
\bra{\chi} H_{\mathrm{final}}^{\mathrm{std}} \ket{\chi}
=
\bra{\chi}\ket{0}\bra{0}_1\otimes\one\ket\chi\le
\epsilon=O(\exp(-p(n))\ ,
\]
 due to our choice of acceptance threshold, i.e., the low energy subspace
has dimension $\dim \mc A$.  We thus obtain a Hamiltonian with a $\dim \mc
A$ dimensional ground state subspace with energy splitting $\leq
\epsilon=\exp(-p(n))$ and a $1/\poly(n)$ spectral gap above.
\end{proof}

The following two lemmas are used in the preceding proof of
Theorem~\ref{thm:sLH-sBQP-hard}.
\begin{lemma}
\label{L:blk-diag}
$H$ has a spectral gap of size $1/\poly(n)$.
\end{lemma}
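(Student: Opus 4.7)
The plan is to block-diagonalize $H$ via the standard Kitaev rotation, reduce each block to a one-dimensional random walk on the clock, and invoke the known spectral bound for endpoint-penalized path Laplacians. First I would observe that for any $\ket{\psi}_I$ and any ancilla computational-basis state $\ket{\bm x}_A$, the $(T+1)$-dimensional subspace $\mathrm{span}\{U_t\cdots U_1\ket{\psi}_I\ket{\bm x}_A\ket{t}_T\}_{t=0}^T$ is preserved by $H$ whenever $\ket{\psi}_I$ lies entirely in $\mc A$ or entirely in $\mc R$. The natural change of basis is the clock-conditioned unitary $W=\sum_t (U_t\cdots U_1)\otimes\one_A\otimes\proj{t}_T$. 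In the rotated Hamiltonian $\tilde H=W^\dagger H W$, the $H_\mathrm{evol}(t)$ terms collapse to $\one_I\otimes\one_A\otimes L$, with $L$ the path-graph Laplacian on the clock register; $H_\mathrm{init}$ is unchanged because $W$ is trivial at $t=0$; and, using $\Pi_{U[\mc R]}=U(\Pi_{\mc R}\otimes\proj{\bm 0}_A)U^\dagger$ together with the fact that $W$ applies the full circuit $U$ at $t=T$, the final term becomes $\Pi_{\mc R}\otimes\proj{\bm 0}_A\otimes\proj{T}_T$.

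Next I would case-split according to $\ket{\bm x}_A$ and according to whether the input component lies in $\mc A$ or $\mc R$; both splits are consistent with $\tilde H$, and together with $\mc A\oplus\mc R=\mathbb{C}^{2^n}$ they exhaust the Hilbert space. When $\ket{\bm x}_A\neq\ket{\bm 0}_A$, $H_\mathrm{final}$ vanishes and the block reduces to $L+\proj{0}_T$ on the clock. When $\ket{\bm x}_A=\ket{\bm 0}_A$ and $\ket{\psi}_I\in\mc A$, both $H_\mathrm{init}$ and (since $\Pi_{\mc R}\ket{\psi}_I=0$) $H_\mathrm{final}$ vanish, so the block is simply $L$; this contributes the zero-energy uniform-in-$t$ state, which in the original frame is exactly the proof history of $\ket{\psi}_I$. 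When $\ket{\bm x}_A=\ket{\bm 0}_A$ and $\ket{\psi}_I\in\mc R$, the block reduces to $L+\proj{T}_T$. Counting ground states, the kernel of $H$ is exactly $\dim\mc A$-dimensional, spanned by the proof histories of an orthonormal basis of $\mc A$, and every other eigenvalue is lower-bounded by the smallest eigenvalue of $L+\proj{0}_T$ (equivalently $L+\proj{T}_T$).

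The main obstacle, and the only non-trivial quantitative step, is the lower bound on the smallest eigenvalue of the $(T+1)$-dimensional operator $L+\proj{0}_T$. I would not reprove this from scratch: it is the classical endpoint-penalty estimate underlying Kitaev's original \QMA-completeness proof, and follows from the geometric lemma applied to the pair $\{L,\proj{0}_T\}$, whose kernels have minimum principal angle satisfying $\sin^2(\theta/2)=\Theta(1/T)$, combined with the $\Theta(1/T^2)$ gap of $L$ above its kernel, yielding an $\Omega(1/T^3)$ lower bound. Since $T=\poly(n)$, this establishes the advertised $1/\poly(n)$ spectral gap for $H$.
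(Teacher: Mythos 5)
Your proposal is correct and follows essentially the same route as the paper: the same clock-conditioned rotation $W$, the same three-way block split (uninitialized ancilla, accepted proofs, rejected proofs), and the same reduction of each block to a path Laplacian on the clock with an endpoint penalty. The only divergence is the final quantitative step, where you invoke Kitaev's geometric lemma to get a lower bound of order $1/T^3$ on $L+\proj{0}_T$, while the paper diagonalizes the tridiagonal matrix exactly and obtains the sharper $1-\cos\bigl(\pi/(2T+3)\bigr)=\Theta(1/T^2)$; both are $1/\poly(n)$, so the lemma follows either way.
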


\begin{proof}
Our proof follows closely the discussion in Ref.~\cite{kitaev:book},
cf.~also~\cite{schuch:mps-gap-np}. We
can block diagonalize $H$ by the (conjugate) action of the following
unitary operator, 
\begin{align} W = \sum_{j=0}^T
        U_j \cdots U_1 \otimes \ket{j}\bra{j}_T \,, \end{align}
which maps $H \to H' = W^\dag H W$. As this has no effect on the spectrum,
we can work with the simpler $H'$ from now on. 

Let us explicitly write the effect of conjugation by $W$ on the terms of
$H$. The first term is unaffected, $H'_{\mathrm{init}}=H_{\mathrm{init}}$.
The final term becomes $H'_{\mathrm{final}} = \Pi_{\mc R}\otimes
\ket{\mathbf 0}\bra{\mathbf 0}_A \otimes \ket{T}\bra{T}_T$, where
$\Pi_{\mc R}$ is simply the projector onto the space $\mc R$, and the
ancillas are in the correct initial state.

We can now conjugate each of the terms in $H_{\mathrm{evol}}(t)$ separately. For example, the first term gives
\begin{align}
	W^\dag (U_t \otimes \ket{t}\bra{t-1}) W = \one \otimes \ket{t}\bra{t-1} \,.
\end{align}
The other terms are exactly analogous, and we find that
\begin{align*}
	H'_{\mathrm{evol}}(t) &= \openone \otimes
        \tfrac{1}{2}\big[\ket{t-1}\bra{t-1} + \ket{t}\bra{t}\\ 
        &\qquad\qquad-\ket{t}\bra{t-1}-\ket{t-1}\bra{t}\big]\,.
\end{align*}
The total evolution Hamiltonian is then block-diagonal a matrix which looks
like a hopping Hamiltonian in the clock register, \begin{align} \sum_t
        H'_{\mathrm{evol}}(t) = \one \otimes E \,, \end{align}
where the $(T+1)$-by-$(T+1)$--dimensional tri-diagonal matrix $E$ is given by
\begin{align}
	E = 
	\left(
	\begin{array}{rrrrr}
		\frac{1}{2} & -\frac{1}{2} & & &  \\ \noalign{\medskip}
		- \frac{1}{2} & 1 & -\frac{1}{2}  & & \\ \noalign{\medskip}
		 & - \frac{1}{2} & \ddots &  & \\ \noalign{\medskip}
		 &  &  &  1 & -\frac{1}{2}\\ \noalign{\medskip}
		 &  &  & -\frac{1}{2} & \frac{1}{2}
	\end{array}
	\right) \,.
\end{align}

To discuss the spectrum of $H'$,  let
\begin{align*}
\mc S_1&=\mc A\otimes \ket{0}\bra{0}_A\otimes \mathbb C^{T+1}\;,\\
\mc S_2&=\mc R\otimes \ket{0}\bra{0}_A\otimes \mathbb C^{T+1}\;,
\end{align*}
and $\mc S_3$ the orthogonal complement of $\mc S_1\oplus\mc S_2$; i.e.,
$\mc S_1$ corresponds to evolutions starting from good proofs, $\mc S_2$
to those starting from wrong proofs, and $\mc S_3$ to evolutions with
wrongly initialized ancillas.  Note that
$H'=H|_{\mc S_1}\oplus H|_{\mc S_2}\oplus H|_{\mc S_3}$, thus, we can
analyze the spectrum for $H|_{\mc S_p}$ separately.  Note that the
restriction to $\mc S_p$ does not affect $H'_\mathrm{evol}(t)$. Since we
expect the ground state subspace to occur on $\mc S_1$, we need to compute
ground state energy and gap of $H'|_{\mc S_1}$, as well as lower bound the
ground state energies of $H'|_{\mc S_2}$ and $H'_{\mc S_3}$.

First, $H'_\mathrm{init}|_{\mc S}=H'_\mathrm{final}|_{\mc S}=0$, i.e., the
spectrum of $H'|_{\mc S}$ equals the spectrum of $E$, which can be
straightforwardly determined to be $1-\cos\omega_n$, with
$\omega_n=n\pi/(T+1)$, and eigenfunctions $(\cos \tfrac12\omega_n,
\cos\tfrac32\omega_n,\dots)$; the ground state degeneracy is indeed $\dim
\mc A$ as desired.

On the other hand, $H'_\mathrm{final}|_{\mc S_2}=1$, and
$H'_\mathrm{init}|_{\mc S_3}\ge1$, i.e., in both cases the ground state
energy of $H'|_{\mc S_p}$ is lower bounded by the ground state energy of 
\begin{align*}
	E' = 
	\left(
	\begin{array}{rrrrr}
		\frac{1}{2} & -\frac{1}{2} & & &  \\ \noalign{\medskip}
		- \frac{1}{2} & 1 & -\frac{1}{2}  & & \\ \noalign{\medskip}
		 & - \frac{1}{2} & \ddots &  & \\ \noalign{\medskip}
		 &  &  &  1 & -\frac{1}{2}\\ \noalign{\medskip}
		 &  &  & -\frac{1}{2} & \frac{3}{2}
	\end{array}
	\right) \,,
\end{align*}
which has eigenvalues $1-\cos\vartheta_n$, with
$\vartheta_n=(n+\tfrac12)\pi/(T+\tfrac32)$, and eigenfunctions $(\cos
\tfrac12\vartheta_n, \cos\tfrac32\vartheta_n,\dots)$. 

It follows that $H'$ (and thus $H$) has a ground state energy of
$1-\cos\omega_0=0$, and a gap
$1-\cos\tfrac{\pi}{2T+3}=O(1/T^2)=O(1/\poly(n))$ above.
\end{proof}

It remains to prove Eq.~(\ref{E:spectral-ineq}), which follows from the
following Lemma by choosing
$P = \ket{0}\bra{0}_1\otimes\one$, $Q = \Pi_{U[\mc R]}$, and using
Eq.~(\ref{E:chi}).
\begin{lemma}
\label{L:spec-bound}
Let $P$ and $Q$ be projectors such that $\lVert Q (\one -P) Q
\rVert_\infty \le \epsilon$. Then 
\begin{align}
	P - Q \ge - \sqrt{\epsilon}\one \,.
\end{align}
\end{lemma}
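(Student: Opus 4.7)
The plan is to prove $P-Q+\sqrt\epsilon\,\one\ge 0$ by bounding every eigenvalue of the self-adjoint operator $Q-P$ from above by $\sqrt\epsilon$; non-positive eigenvalues are trivial, so only the positive ones require work.

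First I would reformulate the hypothesis in a more useful form. Using $P^2=P$ and $Q^2=Q$ and expanding $(Q-P)^2 = Q+P-QP-PQ$, a short calculation gives the identity $Q(Q-P)^2Q = Q - QPQ = Q(\one-P)Q$, so the hypothesis reads $\|Q(Q-P)^2Q\|_\infty \le \epsilon$. The key structural fact is that $(Q-P)^2$ commutes with $Q$: indeed
\begin{align*}
[Q,(Q-P)^2] &= [Q,P]-[Q,QP]-[Q,PQ]\\
&= (QP-PQ)-(QP-QPQ)-(QPQ-PQ) = 0\,,
\end{align*}
so each eigenspace $\mc H_\mu$ of the positive self-adjoint operator $(Q-P)^2$ is $Q$-invariant.

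Now let $\ket{v}$ be a unit eigenvector of $Q-P$ with eigenvalue $\lambda>0$, so $\ket{v}\in\mc H_{\lambda^2}$. I claim $Q\ket{v}\neq 0$: otherwise the eigenvalue equation forces $-P\ket{v}=\lambda\ket{v}$, making $\ket{v}$ an eigenvector of the positive operator $P$ with the negative eigenvalue $-\lambda$, which is impossible. Setting $\ket{w}:=Q\ket{v}$, the $Q$-invariance of $\mc H_{\lambda^2}$ places $\ket{w}$ in $\mc H_{\lambda^2}$, and $Q\ket{w}=Q^2\ket{v}=Q\ket{v}=\ket{w}$. Therefore
\begin{align*}
\bra{w}Q(Q-P)^2Q\ket{w} = \bra{w}(Q-P)^2\ket{w} = \lambda^2\|w\|^2\,,
\end{align*}
while the operator-norm bound yields $\bra{w}Q(Q-P)^2Q\ket{w}\le\epsilon\|w\|^2$. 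Dividing by $\|w\|^2>0$ gives $\lambda^2\le\epsilon$, hence $\lambda\le\sqrt\epsilon$.

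The only step that requires a moment of insight is the commutation identity $[Q,(Q-P)^2]=0$ (equivalently, that the eigenspaces of $(Q-P)^2$ are jointly invariant under $Q$); everything else is bookkeeping. An alternative and perhaps more conceptual route is to invoke Jordan's lemma on pairs of projectors, which reduces the claim to a routine $2\times 2$ block calculation in which $P-Q$ has eigenvalues $\pm\sqrt\epsilon$ on each nontrivial block.
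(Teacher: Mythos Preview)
Your proof is correct and takes a genuinely different route from the paper. The paper invokes Jordan's canonical form for a pair of projectors~\cite{Jordan1875,Halmos1969}, reducing everything to $2\times 2$ blocks
\[
P_j=\begin{pmatrix}1&0\\0&0\end{pmatrix},\qquad
Q_j=\begin{pmatrix}c^2&cs\\cs&s^2\end{pmatrix},
\]
reads off $\|Q_j(\one-P_j)Q_j\|_\infty=s^2\le\epsilon$, and computes the eigenvalues of $P_j-Q_j$ to be $\pm|s|\ge-\sqrt\epsilon$. You instead establish the commutation identity $[Q,(Q-P)^2]=0$ directly and use it to push any positive eigenvector of $Q-P$ into the range of $Q$, where the hypothesis bites. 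This is more elementary in that it avoids citing an external structural lemma; the Jordan route, by contrast, makes the block structure completely explicit and is exactly the ``alternative'' you mention at the end. Both arguments are short, and in finite dimensions (the setting of the paper) there is nothing to choose between them on rigor.
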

\begin{proof}
We begin by recalling the result due to Jordan~\cite{Jordan1875} (see Ref.~\cite{Halmos1969} for a more modern treatment) for the simultaneous canonical form of two projectors. In the subspace where $P$ and $Q$ commute, both operators are diagonal in a common basis and the spectrum is either $(0,0), (0,1), (1,0)$, or $(1,1)$, and direct sums of those terms. In the subspace where they don't commute, the problem decomposes into a direct sum of two-by-two blocks given by
\begin{align}
	P_j = \begin{pmatrix}
	1&0\\
	0&0
\end{pmatrix} \quad , \quad Q_j = \begin{pmatrix}
	c^2&cs\\
	cs&s^2
\end{pmatrix}\,,
\end{align}
where $s = \sin(\theta_j)$ for some angle $\theta_j$, $c^2+s^2=1$, and the subscript $j$ just labels a generic block. 

In fact, this two-by-two block form is completely general if we allow embedding our projectors into a larger space while preserving their rank. The rank-preserving condition guarantees that our bound is unchanged, since we are only appending blocks of zeros, and so we will consider this two-by-two form without loss of generality.

The constraint that $\lVert Q (\one -P) Q \rVert_\infty \le \epsilon$ implies
constraints on the values that $\sin(\theta_j)$ can take. In
particular, we can directly compute this operator norm in each block
separately, and we find that for all $j$ \begin{align} \lVert Q_j (\one
        -P_j) Q_j \rVert_\infty = \sin^2(\theta_j)\le \epsilon \,.  \end{align}
We can also directly compute in each block that
\begin{align}
	P_j - Q_j = \begin{pmatrix}
	1- c^2&1- cs\\
	-cs&-s^2
\end{pmatrix}\,.
\end{align}
The spectrum of this operator is easily computed to be $\pm \lvert \sin(\theta_j)\rvert$. Thus, the least eigenvalue of $P-Q$ is bounded from below by $-\sqrt \epsilon$. 
\end{proof}

\subsection{Quantum vs. Classical Counting Complexity}

We finally want to relate \sBQP\ to the classical counting class \sP. It
is clear that any \sP\ problem can be solved in \sBQP\ by using a
classical circuit. We will now show that conversely, \sBQP\ can be reduced
to \sP\ under \emph{weakly parsimonious reductions}: That is, for any
function $f\in\sBQP$ there exist polynomial-time computable functions
$\alpha$ and $\beta$, and a function $g\in\sP$, such that $f = \alpha\circ
g \circ \beta$. This differs from Karp reductions where no postprocessing
is allowed, $\alpha=\mathrm{Id}$, but still only requires a single call to
a \sP{} oracle, in contrast to Turing reductions. 

\begin{theorem}
There exists a weakly parsimonious reduction from \emph{\sBQP{}} to
\emph{\sP}.
\end{theorem}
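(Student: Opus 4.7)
The plan is to express $\dim\mc A$ as a rounded value of $\tr\Omega$ and then evaluate that trace with a single \sP{} oracle call. First I would invoke strong amplification (Thm.~\ref{thm:strongamp}) to move the thresholds of $\Omega$ to $a=1-2^{-r}$, $b=2^{-r}$ with $r=n+3$. Since $\Omega$ has no eigenvalues in $(b,a)$ and acts on a Hilbert space of dimension $N=2^n$, the $\dim\mc A$ eigenvalues above $a$ each lie within $2^{-r}$ of $1$ while the eigenvalues below $b$ contribute at most $N\cdot 2^{-r}$ in total, so $\abs{\tr\Omega-\dim\mc A}\le N\cdot 2^{-r}<\tfrac14$ and $\dim\mc A=\lfloor\tr\Omega+\tfrac12\rfloor$. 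It therefore suffices to compute $\tr\Omega$ to accuracy $\tfrac14$.

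Second, I would expand $\tr\Omega$ as a Feynman-style path sum in the computational basis. Compiling the verifier into a real gate set whose entries all lie in $\{0,\pm1,\pm 2^{-1/2}\}$ (for instance Hadamard plus Toffoli, which is universal~\cite{bernstein:quantum-complexity-theory}), and inserting resolutions of identity between every pair of consecutive gates in both $U$ and $U^\dagger$ while tracing over the input register, yields $\tr\Omega=\sum_\zeta f(\zeta)$ with $f(\zeta)$ precisely as in Eq.~(\ref{E:pathintegral}). Each $f(\zeta)$ then has the exact form $s(\zeta)\cdot 2^{-M}$ with $s(\zeta)\in\{-1,0,+1\}$ and $M=\poly(n)$ a fixed integer determined by the number of Hadamard gates in $U$. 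The classical description of the circuit allows $s(\zeta)$ to be computed from $\zeta$ in deterministic polynomial time by local bookkeeping gate-by-gate.

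Third, I would reduce the signed sum to a single nonnegative count by a standard packing trick. Writing $g_\pm(\zeta):=\mathbbm{1}[s(\zeta)=\pm 1]$, let $P=\poly(n)$ be the bit length of $\zeta$ and choose $L=2^{P+1}$ so that $L$ strictly exceeds $\sum_\zeta g_-(\zeta)$. Define the boolean function $\phi(b,\zeta,y)$ on a sign bit $b$, a path $\zeta$, and padding $y\in\{0,1\}^{\lceil\log L\rceil}$ to accept iff either $b=0$ with $g_+(\zeta)=1$ (and $y$ arbitrary), or $b=1$ with $g_-(\zeta)=1$ and $y=0$. Then $\phi$ is polynomial-time computable and
\begin{equation*}
\#\phi=L\sum_\zeta g_+(\zeta)+\sum_\zeta g_-(\zeta),
\end{equation*}
from which the postprocessing map $\alpha$ recovers both sums by integer division and remainder, forms $\tr\Omega=(\sum_\zeta g_+-\sum_\zeta g_-)/2^M$, and returns $\lfloor\tr\Omega+\tfrac12\rfloor=\dim\mc A$; the preprocessing map $\beta$ simply outputs a circuit for $\phi$.

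The principal obstacle is simultaneously handling two tensions: $f(\zeta)$ is signed while \sP{} counts only nonnegative witnesses, and a weakly parsimonious reduction permits just one oracle call. The packing encoding of the third step resolves both at once, but only works because strong amplification in the first step keeps $M$, $P$, and hence the bit-length of $\#\phi$ polynomial in $n$, so the arithmetic performed by $\alpha$ remains efficient.
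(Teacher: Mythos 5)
Your proposal is correct and follows the same skeleton as the paper's proof (amplify so that $\abs{\tr\Omega-\dim\mc A}\le\tfrac14$, expand $\tr\Omega$ as a Feynman path sum, convert that sum into a single \sP{} count plus efficient postprocessing), but the mechanism you use to turn the signed sum $\sum_\zeta f(\zeta)$ into a nonnegative integer count is genuinely different. The paper keeps the gate set arbitrary, notes only that $f(\zeta)$ can be taken real and bounded by $1$, shifts it by $+1$ so it becomes nonnegative, rounds $2^{\abs\zeta+2}(f(\zeta)+1)$ to an integer $g(\zeta)$, and realizes $g(\zeta)$ as a count via the threshold indicator $h(\zeta,\xi)=\mathbbm{1}[0\le\xi<g(\zeta)]$; the price is a rounding-error analysis and the subtraction of a known offset in postprocessing. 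You instead compile into a dyadic gate set so that $f(\zeta)=s(\zeta)2^{-M}$ exactly with $s(\zeta)\in\{0,\pm1\}$, and then pack the positive and negative path counts into a single integer via the multiplier $L=2^{P+1}$, recovering both by division and remainder. Your route buys an exact evaluation of $\tr\tilde\Omega$ with no rounding of the summands, at the cost of a compilation step you should not elide: approximating the given verifier by a Hadamard--Toffoli circuit (after realification) replaces $\Omega$ by some $\tilde\Omega$ with $\lVert\Omega-\tilde\Omega\rVert\le\delta$, which perturbs the trace by up to $2^n\delta$; you need to take $\delta=2^{-\poly(n)}$ (polynomial overhead via Solovay--Kitaev) and fold this into the $\tfrac14$ error budget, and you should amplify the compiled circuit rather than the original one so the spectral promise refers to $\tilde\Omega$. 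With that caveat made explicit, both the counting identity $\#\phi=L\sum_\zeta g_+(\zeta)+\sum_\zeta g_-(\zeta)$ and the single-oracle-call structure of the weakly parsimonious reduction are sound.
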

\begin{proof}
We start from a verifier operator $\Omega$ for a \sBQP\ problem.
First, we use strong error reduction to let $a = 1 - 2^{-(n-2)}$ and $b = 2^{-(n+2)}$. It
follows that
\begin{equation}
\lvert \dim \mc A - \tr \Omega \rvert \leq 2^n 2^{-(n+2)} = \frac{1}{4}
\end{equation}
and thus we need to compute $\tr \Omega$ to accuracy $1/4$. This can be
done using the ``path integral'' method previously used to show
containments of quantum classes inside \PP\ and
\sP~\cite{adleman:pathintegral}. We rewrite $\tr
\Omega  = \sum_\zeta f(\zeta)$, where the sum is over products of
transition probabilities along a path, which we label 
\begin{align}
	\zeta \equiv (i_0,\ldots,i_N,j_1,\ldots,j_N)\,,
\end{align}
so that
\begin{align}
\label{app:eq:pathintegral}
	f(\zeta) =& 
	    \bra{i_0}_I \braanc U_1^\dagger\proj{j_1} U_1^\dagger \cdots 
	    U_T^\dagger \ket{j_T} \times \\
	&\quad\bra{i_T}\big[{\proj 0}_1\otimes\one\big]\proj{i_T}
	     U_T \cdots  U_1  \ket{i_0}_I \ketanc\, .
             \nonumber
\end{align}
(cf. Fig.~1 in the main manuscript for an illustration).

Since any quantum circuit can be recast in terms of real gates at the cost
of doubling the number of
qubits~\cite{bernstein:quantum-complexity-theory}, we can  simplify the
proof by assuming $f(\zeta)$ to be real.  To achieve the desired accuracy
it is sufficient to approximate $f$ up to $|\zeta|+2$
digits, where $|\zeta| = \poly(n)$ is the number of bits in $\zeta$. Now define
\begin{equation}
g(\zeta):=\mathrm{round}\big[2^{\abs{\zeta}+2}(f(\zeta)+1)\big]
\end{equation}
and note that $g(\zeta)$ is a positive and integer-valued function satisfying
\begin{equation}
        \label{eq:one-more-approx}
	\Bigl\lvert\bigl[2^{-\abs{\zeta}-2} \sum_\zeta g(\zeta) - 1 \bigr] 
	- \sum_\zeta f(\zeta)\Bigr\rvert \le \tfrac14 \,.
\end{equation}
Finally, by defining a boolean indicator function,
\begin{align*}
	h(\zeta,\xi) = 
		\begin{cases} 1 & \mbox{if } 0\le\xi < g(\zeta)\\ 
                        0 & \mbox{otherwise}
		\end{cases}
\end{align*}
we can write $g(\zeta) = \sum_{\xi\ge0} h(\zeta,\xi)$, and thus,
\[
\sum_\zeta g(\zeta) = \sum_{\xi,\zeta} h(\zeta,\xi)\ .
\]
This shows that $\tr \Omega$
can be approximated to accuracy $\tfrac14$, and thus $\dim \mc A$ can be
determined by counting the number of satisfying assignments of a single
boolean function $h(\zeta,\xi)$ that can be efficiently constructed from
$\Omega$, i.e., by a single query to a black box solving \sP{}
problems, together with the efficient postprocessing described by
Eq.~(\ref{eq:one-more-approx}). 
\end{proof}


\begin{thebibliography}{10}

\bibitem{terhal:lh-2d-qma}
R.~Oliveira and B.~M. Terhal,
\newblock Quant. Inf. Comput. {\bf 8}, 900 (2009), quant-ph/0504050.

\bibitem{aharonov:1d-qma}
D.~Aharonov, D.~Gottesman, S.~Irani, and J.~Kempe,
\newblock Commun. Math. Phys. {\bf 287}, 41 (2009), arXiv:0705.4077.

\bibitem{schuch:dft-qma}
N.~Schuch and F.~Verstraete,
\newblock Nature Physics {\bf 5}, 732 (2009), arXiv:0712.0483.

\bibitem{kitaev:book}
A.~Y. Kitaev, A.~H. Shen, and M.~N. Vyalyi,
\newblock {\em Classical and quantum computation} (American Mathematical
  Society, Providence, Rhode Island, 2002).

\bibitem{kitaev:qma}
D.~Aharonov and T.~Naveh,
\newblock (2002), quant-ph/0210077.

\bibitem{brandao:pursuit-for-uniqueness}
D.~Aharonov, M.~Ben-Or, F.~G. Brandao, and O.~Sattath,
\newblock (2008), arXiv:0810.4840.

\bibitem{jain:unique-witness}
R.~Jain, I.~Kerenidis, M.~Santha, and S.~Zhang,
\newblock (2009), arXiv:0906.4425.

\bibitem{marriott:qma-strongamp}
C.~Marriott and J.~Watrous,
\newblock Comput. Complex. {\bf 14}, 122 (2005), cs/0506068.

\bibitem{Bravyi2009a}
S.~Bravyi, C.~Moore, and A.~Russell,
\newblock (2009), arXiv:0907.1297.

\bibitem{cleve:qalg_revisited}
R.~Cleve, A.~Ekert, C.~Macchiavello, and M.~Mosca,
\newblock Proc.\ R.\ Soc.\ Lond. {\bf 454}, 339 (1998), quant-ph/9708016.

\bibitem{schuch:mps-gap-np}
N.~Schuch, I.~Cirac, and F.~Verstraete,
\newblock Phys.\ Rev.\ Lett. {\bf 100}, 250501 (2008), arXiv:0802.3351.

\bibitem{kempe:3local-qma}
J.~{Kempe}, A.~{Kitaev}, and O.~{Regev},
\newblock SIAM Journal of Computing {\bf 35}, 1070 (2006), quant-ph/0406180.

\bibitem{adleman:pathintegral}
L.~Adleman, J.~DeMarrais, and M.-D. Huang,
\newblock SIAM J. Comput. {\bf 26}, 1524 (1997).

\bibitem{liu:n-rep-qma}
Y.-K. Liu, M.~Christandl, and F.~Verstraete,
\newblock Phys.\ Rev.\ Lett. {\bf 98}, 110503 (2007), quant-ph/0609125.

\bibitem{shi-zhang-private}
Y.~Shi and S.~Zhang,
\newblock private communication; notes available at
  \texttt{http://www.cse.cuhk.edu.hk/}
  \texttt{\~{}syzhang/papers/SharpBQP.pdf}.

\bibitem{bhatia:book}
R.~Bhatia,
\newblock {\em Matrix Analysis} (Springer, 1997).

\bibitem{Jordan1875}
C.~Jordan,
\newblock Bulletin de la Soci\'{e}t\'{e} Math\'{e}matique de France {\bf 3},
  103 (1875).

\bibitem{Halmos1969}
P.~Halmos,
\newblock Trans. Amer. Math. Soc. {\bf 144}, 381 (1969).

\bibitem{bernstein:quantum-complexity-theory}
E.~Bernstein and U.~Vazirani,
\newblock Siam J. of Comp. {\bf 26}, 1411 (1997), quant-ph/9701001.

\bibitem{brandao:thesis}
F.G.S.L.~Brandao, PhD thesis (2008), arXiv:0810.0026.

\bibitem{osborne:1d-density-of-states}
T.~J. {Osborne},
\newblock (2006), cond-mat/0605194.

\end{thebibliography}

\end{document}